\newtheorem{proposition}{Proposition}
\newtheorem{theorem}{Theorem}
\newtheorem{lemma}{Lemma}
\newcommand{\diff}[2]{\frac{\partial{#1}}{\partial{#2}}}
\newcommand{\Alg}[0]{\mathscr{A}}
\begin{document}

	{\title{\protect\vspace*{-2cm}Symmetries and conservation laws \\ for a generalization of Kawahara equation}}
	\author{Jakub Va\v{s}\'\i{}\v{c}ek\\ Silesian University in Opava, Mathematical institute in Opava,\\
Na Rybn\'\i{}\v{c}ku 1, 746 01 Opava, Czech Republic\\ E-mail: {\tt jakub.vasicek@math.slu.cz}}

\maketitle
	\begin{abstract}
	We give a complete classification of generalized and formal symmetries and local conservation laws for a nonlinear evolution equation which generalizes the Kawahara equation having important applications in the study of plasma waves and capillary-gravity water waves.\looseness=-1
	
	In particular, we show that the equation under study admits no genuinely generalized symmetries and has only finitely many local conservation laws, and thus this equation is not symmetry integrable.
	\end{abstract}

\section{Introduction}
Consider the equation
\begin{equation}\label{kaw0}
u_t= \alpha uu_x + \beta u^2u_x+ \gamma u_{xxx}+ \mu u_{5x},
\end{equation}
where  $\alpha, \beta, \gamma$ and $\mu$ are constants, see \cite{kaw72}, as well as \cite{gan17, pol12} and references therein. This equation arises inter alia in the modeling of plasma waves and capillary-gravity waves, see \cite{gan17, kaw72} and references therein, and to this day remains a subject of intense research, see e.g.\ \cite{ dem10, gan17}. 

Equation \eqref{kaw0} is often referred to in the literature as the Kawahara equation, although the original equation from Kawahara's paper \cite{kaw72} has $\beta=0$. On the other hand, some authors refer to \eqref{kaw0} with $\alpha=0$ as to the modified Kawahara equation.

Below we consider the following natural generalization of \eqref{kaw0}: 	
\begin{equation}
\label{kaw}
	u_t= a u_{5x} + b u_{xxx} + f(u) u_x,
\end{equation}
where $a,b$ are constants, $a\neq 0$, and $f=f(u)$ is a nonconstant locally analytic function of $u$ only. For the sake of brevity we will refer to \eqref{kaw} as to the {\em generalized Kawahara equation (GKE)}. Upon rescaling of $t$ we can get rid of the coefficient $a$, so from now on we assume without loss of generality that $a=1$. Note that equation \eqref{kaw} includes as special cases some other known generalizations of \eqref{kaw0}, cf.\ e.g.\ \cite{dem10, pol12}.\looseness=-1 

It is well known that symmetries and conservation laws are very helpful in both the search for exact solutions and the numerical simulations \cite{olv93}.
On the other hand, integrable systems are well known to have rich symmetry algebras, cf.\ e.g.\ \cite{fok80, fok87, fok96, kvv18, mos, olv93, s17} and references therein. A simple observation that the computation of symmetries is, to a large extent, an algorithmic procedure, cf.\ e.g.\ \cite{olv93}, has, in a series of nontrivial developments, lead to symmetry-based integrability tests and the related notion of symmetry integrability. For a recent survey on the latter see \cite{mik09}, and for integrability in general and the transformations arising in the course of classification of integrable systems see e.g.\ \cite{cd, luc18, dun, fok80, fok87, kra11, kvv18, mik09, ps10, s18, vps} and references therein.\looseness=-1

The rest of the paper is organized as follows. In Section \ref{mr} we state our main results, that is, the full classification of generalized symmetries and local conservation laws for GKE \eqref{kaw} with $\partial f/\partial u\neq 0$. For the reader's convenience we recall the basic results from the symmetry integrability theory in Section~\ref{p}. In order to establish our main results, in Section~\ref{nfs} we prove nonexistence of formal symmetries of rank greater than twelve for GKE \eqref{kaw} with $\partial f/\partial u\neq 0$. Finally in Section \ref{gscl} we employ this nonexistence result and the Hamiltonian structure of GKE \eqref{kaw} to establish restrictions on the orders of generalized symmetries and local conservation laws the GKE \eqref{kaw} with $\partial f/\partial u\neq 0$ could possess.

\section{Main results}
\label{mr}	

The first of our results is that GKE \eqref{kaw} admits no genuinely generalized symmetries and thus is not symmetry integrable; all its generalized symmetries are described by the following theorem:

\begin{theorem}
\label{th1}
For an arbitrary function $f(u)$ such that $\partial f/\partial u\neq 0$ equation \eqref{kaw} admits only generalized symmetries which are equivalent to Lie point ones, namely,
\begin{enumerate}
\item[1)] if $f$ is such that $u \partial f / \partial u$, $\partial f / \partial u$ and 1 are not linearly dependent, then \eqref{kaw} has just two linearly independent generalized symmetries with the characteristics $Q_1=u_{5x}+ bu_{xxx} +fu_x$ and $Q_2=u_x$ corresponding to the translations in the time and space variables;

\item[2)] if $u \partial f / \partial u$, $\partial f / \partial u$ and 1 are linearly dependent, then 
 in addition to two generalized symmetries listed in 1), there is a generalized symmetry with the characteristic
\begin{enumerate}
\item $Q_3=t u_x + 1/\alpha$ if $\partial ^2 f / \partial u^2=0$, so $f=\alpha u + \beta$, where $\alpha, \beta$ are constants;
\item $Q_4=t u_x + (u+c)/\gamma$ if $\partial ^2 f / \partial u^2 \neq 0$, so $f=\gamma \ln(u+c) + \delta$, where $\gamma,\delta$ and $c$ are constants.
\end{enumerate}

\end{enumerate}
\end{theorem}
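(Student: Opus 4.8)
The plan is to exploit the nonexistence of formal symmetries of rank greater than twelve established in Section \ref{nfs} (together with the standard fact that a genuinely generalized symmetry of sufficiently high order would generate a formal symmetry of arbitrarily large rank). This immediately bounds the differential order of any generalized symmetry of GKE \eqref{kaw}: there is an explicit $N$ (coming from the rank bound twelve and the order five of the equation) such that every generalized symmetry is equivalent, modulo trivial symmetries, to one whose characteristic $Q$ has order at most $N$. So the problem reduces to a finite, if lengthy, determining-equation computation: substitute a general $Q=Q(t,x,u,u_x,\dots,u_{Nx})$ into the linearized equation $D_t Q = \ell_K(Q)$, where $K = u_{5x}+bu_{xxx}+fu_x$ and $\ell_K$ is its Fr\'echet derivative, and split by powers of the highest-order jet variables.

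The key steps, in order, are as follows. First, I would use the rank-twelve nonexistence result to fix the a priori order bound $N$ on $Q$. Second, I would write out the determining equation $D_tQ-\ell_K(Q)=0$ and analyze it by descending induction on the jet order: the coefficient of the top-order term $u_{(N+5)x}$ forces $Q$ to be independent of $u_{Nx}$ unless $N$ is small, and iterating this collapses $Q$ down to a characteristic of low order (one expects order $\le 5$, matching $Q_1$). Third, with $Q$ now of low order, I would carry out the remaining split of the determining equations explicitly; here the hypothesis $\partial f/\partial u\neq 0$ is essential, since dividing by $f'$ and its derivatives is exactly what pins down the functional form of $Q$ and produces the dichotomy on whether $u\,\partial f/\partial u$, $\partial f/\partial u$ and $1$ are linearly dependent. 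Fourth, in the linearly dependent branch I would solve the resulting ODE for $f$, obtaining $f''=0$ (so $f$ affine) or $f''\neq 0$ with the logarithmic solution $f=\gamma\ln(u+c)+\delta$, and in each subcase read off the extra characteristic $Q_3$ or $Q_4$. Finally, I would check that $Q_1,Q_2$ (and $Q_3$ or $Q_4$ where applicable) are indeed symmetries and are the Lie point ones corresponding to time and space translations (and the scaling/Galilean-type symmetry), and verify linear independence.

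The main obstacle I anticipate is the descending induction that brings a general high-order $Q$ down to low order: while each individual step is a routine coefficient extraction, keeping track of the intermediate dependence of $Q$ on the jet variables across all orders up to $N$ — and ensuring that no exotic high-order symmetry slips through for special $f$ — requires care, and this is precisely where the rank-twelve bound from Section \ref{nfs} does the heavy lifting by making $N$ finite in the first place. A secondary technical point is the case analysis on the linear dependence of $u\,f'$, $f'$, $1$: one must be sure the three listed possibilities (generic, affine $f$, logarithmic $f$) are exhaustive, which follows from solving the linear-dependence condition as a second-order ODE for $f$.
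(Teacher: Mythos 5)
Your proposal follows essentially the same route as the paper: the rank bound of Theorem \ref{th3} is combined with Lemma \ref{lem} (a generalized symmetry of order $s$ yields a formal symmetry of rank at least $s+4$ --- not of arbitrarily large rank, as your parenthetical suggests, but the conclusion is the same) to cap the order of any symmetry characteristic at $8$, after which the result is a finite determining-equation computation with the case split governed by the linear dependence of $u\,\partial f/\partial u$, $\partial f/\partial u$ and $1$. This is exactly the content of Proposition \ref{psym} and the subsequent Lie point symmetry analysis in Section \ref{gscl}.
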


In this connection note that the authors of \cite{gan17} investigated the Lie point symmetries for the equation \begin{equation}
\label{kaw1}
	u_t= a(t) u_{5x} + b(t) u_{xxx} + c(t) f(u) u_x,
\end{equation} 
and the authors of \cite{kur14} did the same earlier for the special case when $f(u)=u^n$. 

While equation \eqref{kaw1}  is slightly more general than \eqref{kaw}, the generalized symmetries of either \eqref{kaw} or \eqref{kaw1} to the best of our knowledge were not studied in full generality in the earlier literature.
Although we considered only a special case, \eqref{kaw}, of \eqref{kaw1} we have obtained a significantly stronger result, namely a complete description of all generalized symmetries.

\begin{theorem}
\label{th2}
For an arbitrary function $f(u)$ such that $\partial f/\partial u\neq 0$ equation \eqref{kaw} admits only local conservation laws with the characteristics of order not greater than four, namely,
\begin{enumerate}
\item[1)] if $\partial ^2 f / \partial u^2 \neq 0$, then equation \eqref{kaw} has just three linearly independent local conservation laws with the conserved densities $\rho_1= u$, $\rho_2= u^2$ and $\rho_3=  (u_{xx}^2 - b u_x^2)/2 +  \widehat{r}$, where $\hat r(u)$ and $r(u)$ are defined by the formulas $\partial \hat r/ \partial u=r(u)$ and $\partial r/ \partial u=f(u)$, and the associated fluxes
\[ \begin{array}{rcl}
\sigma_1 &=& \displaystyle \frac{1}{2} \diff{f}{u} u^2+ u_{xx} b+ u_{4x},\\[3mm]
\sigma_2 &=& \displaystyle u u_{4x} - u_{xxx} u_x + \frac{1}{2} u_{xx}^2 +  bu u_{xx}-\frac{1}{2} b u_x^2 + \frac{1}{3} \diff{f}{u} u^3,\\[3mm]
\sigma_3 &=& \displaystyle -fbu_x^2+\diff{f}{u}u_x^2u_{xx}-b^2u_xu_{xxx}+\frac{1}{2}b^2u_{xx}^2+rbu_{xx}-fu_xu_{xxx}+fu_{xx}^2\\[3mm]
&& \displaystyle +2bu_{4x}u_{xx}-bu_{5x}u_x-bu_{xxx}^2 + \frac{1}{2}r^2+ru_{4x}+\frac{1}{2}u_{4x}^2-u_{5x}u_{xxx}+u_{xx}u_{6x}; \end{array} \] \vspace*{-0.2cm}

\item[2)] if $\partial ^2 f / \partial u^2= 0$, then $f=\alpha u + \beta$, where $\alpha, \beta$ are constants, and \eqref{kaw} admits, in addition to the local conservation laws listed in 1), a local conservation law with the conserved density $\rho_4 = x u + \alpha t u^2/2$ and the associated flux
\[ \begin{array}{rcl}\sigma_4 &=& \displaystyle \frac{1}{6}\alpha((-3bu_x^2+6buu_{xx}+6u_{4x} u - 6u_{xxx} u_x + 3u_{xx}^2)t+3xu^2) + \frac{1}{2}\alpha^2 t u^3\\[3mm]
&&  + 3 b(xu_{xx}-u_x)+xu_{4x}-u_{xxx}. \end{array} \] \vspace*{-0.2cm}
\end{enumerate}
\end{theorem}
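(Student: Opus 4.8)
## Proof proposal for Theorem \ref{th2}

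The plan is to reduce the classification of conservation laws to a cohomological computation in the variational complex, and then to exploit the nonexistence of formal symmetries of rank greater than twelve (established in Section~\ref{nfs}) together with the Hamiltonian structure of GKE \eqref{kaw} to bound the order of the characteristics. Recall that a conserved density $\rho$ corresponds to a characteristic $\gamma = E(\rho)$, where $E$ is the Euler operator, and that $\gamma$ must satisfy the adjoint linearization condition $\ell_F^*(\gamma) = 0$ on solutions of \eqref{kaw}, where $F = u_{5x} + bu_{xxx} + fu_x$. The first step is therefore to write this determining equation explicitly: since $\ell_F = D_x^5 + bD_x^3 + fD_x + f'u_x$, its formal adjoint is $\ell_F^* = -D_x^5 - bD_x^3 - fD_x$, and one analyzes $(D_t + \ell_F^*)(\gamma) = 0$ for $\gamma$ a differential function of order $n$.

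The key structural input is the well-known link between cosymmetries (characteristics of conservation laws) and formal symmetries: if \eqref{kaw} had conservation laws of arbitrarily high order, one could build from them a formal symmetry of arbitrarily high rank, contradicting the rank-twelve bound from Section~\ref{nfs}. Concretely, because GKE is Hamiltonian with the Hamiltonian operator $D_x$ (so that $F = D_x(\delta H/\delta u)$ for a suitable $H$), every conservation law characteristic $\gamma$ yields a generalized symmetry with characteristic $D_x(\gamma)$; the order restriction on symmetries coming from Theorem~\ref{th1} (itself derived from the formal-symmetry nonexistence result) then forces $\mathrm{ord}(\gamma) \le 4$. Thus the first main step is to prove this order bound rigorously — tracking how the order of $\gamma$ relates to the rank of the associated formal symmetry, and verifying that the Hamiltonian map $D_x$ does not lower the order in a way that would spoil the estimate.

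Once characteristics are known to have order at most four, the second step is a direct but finite computation: substitute $\gamma = \gamma(x,t,u,u_x,u_{xx},u_{xxx},u_{4x})$ into $(D_t + \ell_F^*)(\gamma) = 0$, expand using $u_t = F$ and its differential consequences, and collect coefficients of the independent jet variables $u_{5x}, u_{6x}, \dots$ to obtain an overdetermined PDE system for $\gamma$. Splitting this system according to whether $\partial^2 f/\partial u^2$ vanishes or not produces exactly the cases in the statement. Integrating the resulting characteristics via the homotopy formula for the Euler operator recovers the densities $\rho_1 = u$, $\rho_2 = u^2$, $\rho_3 = (u_{xx}^2 - bu_x^2)/2 + \widehat r$, and, in the affine case $f = \alpha u + \beta$, the explicitly time- and space-dependent density $\rho_4 = xu + \alpha t u^2/2$. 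The fluxes $\sigma_i$ are then obtained by back-substitution, i.e.\ by solving $D_t\rho_i + D_x\sigma_i = 0$ for $\sigma_i$, which is again a routine integration in $x$.

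The main obstacle I expect is the first step: carefully establishing that the order bound on conservation law characteristics really does follow from the rank-twelve formal-symmetry nonexistence result. This requires a precise bookkeeping of how passing from a cosymmetry to a symmetry (via the Hamiltonian operator $D_x$) and then to a formal symmetry affects the relevant order/rank, including the degenerate subcases where the naive estimate is not tight and one must argue separately — in particular the branches $f = \alpha u + \beta$ and $f = \gamma\ln(u+c) + \delta$, where extra low-order conservation laws appear and the generic argument must be supplemented by direct verification. Everything after the order bound is elementary: the determining system in finitely many jet variables is linear in $\gamma$, so the classification is a matter of organized elimination, and the verification that the listed $(\rho_i,\sigma_i)$ are indeed conserved pairs is a finite calculation that can be checked directly.
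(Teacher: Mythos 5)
Your proposal follows essentially the same route as the paper: use the Hamiltonian operator $\mathscr{D}=D_x$ to map conservation-law characteristics to symmetry characteristics, invoke the order bound on generalized symmetries (itself derived from the nonexistence of formal symmetries of rank $\geq 13$ via Lemma~\ref{lem}) to conclude that characteristics have order at most four, and then finish with a direct finite computation. The only cosmetic difference is your framing via the adjoint linearization/cosymmetry condition, which the paper does not spell out but which leads to the same determining system.
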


While these conservation laws were already found in \cite{gan17} as a result of the search for local conservation laws with the characteristics of order up to four, we go significantly further: namely, we prove that \eqref{kaw} has no other local conservation laws (with the characteristics of arbitrarily high order) whatsoever.

\section{Preliminaries}
\label{p}
In this section, we shall briefly recall a number of known results following mostly \cite{fok80, fok87, kup00, mik87, mik09, olv93, ps10, s99, sev16, sok88}.

We consider an evolution equation in two independent and one dependent variable of the form
\begin{equation}
\label{evo}
u_t=K(x,u,u_x,\dots,u_{nx}), \qquad n \geq 2, 
\end{equation}
where $K$ is locally analytic with respect to all its arguments.
For partial derivatives in $x$ we employ the usual abbreviated notation, for example $\partial^3 u/\partial x^3=u_{3x}=u_{xxx}$, $\partial^4 u/\partial x^4=u_{4x}$, and so on, including $u_{0x}\equiv u$.

Furthermore let $D_x$ and $D_t$ be total derivatives in $x$ and $t$ restricted to \eqref{evo}, that is:
$$ D_x=\diff{}{x} + \sum_{i=0}^\infty{u_{(i+1)x}}\diff{}{u_{ix}}, \qquad D_t=\diff{}{t} + \sum_{i=0}^\infty{D_x^i(K)\diff{}{u_{ix}}}$$

Recall \cite{kra11, kvv18} that a {\em local function} is a smooth function that may depend on $x,t,u,u_x,\dots,\allowbreak u_{kx}$
for an arbitrary but finite $k$.

A function $f(x,t,u,\dots,u_{sx})$ is called a {\em rational local function} if and only if it can be written as $$f=\frac{g}{h},$$ where $g,h$ are local functions polynomial in all their arguments. We shall denote the (differential) field of rational local functions by $\mathscr{A}_0$, cf.\ e.g.\ \cite{des12, rit66}.

Let $\mathscr{A}$ be a differential extension field of $\mathscr{A}_0$ such that $K \in \Alg$, so $\Alg$ is closed under $D_x$ and $D_t$.
In what follows we will refer to the elements of $\Alg$ as to the \emph{differential functions}. Unless explicitly stated otherwise all functions below are assumed to belong to $\Alg$.

An \emph{evolutionary vector field}
 with the characteristic $Q$ from $\Alg$ has the form
$$\mathbf{v}_Q=Q\diff{}{u}.$$
Such a vector field $\mathbf{v}_Q$ is called a \emph{generalized symmetry} for \eqref{evo} iff its \emph{characteristic} $Q$ satisfies \looseness-1
\begin{equation}
\label{chsym}
D_t(Q) =\mathrm{\textbf{D}}_K(Q).
\end{equation}
Here, for $F=F(x,t,u,u_x,\dots,u_{kx})\in\Alg$ we denote by $\mathrm{\textbf{D}}_F$ the (formal) Fr\'echet derivative
\begin{equation}
\mathrm{\textbf{D}}_F = \sum_{j=0}^{k}{\diff{F}{u_{jx}} D_x^j}.
\end{equation}

Following \cite{kup00} consider an algebra $\mathscr{L}$ of formal series in $\xi$ of the form

\begin{equation}
L=\sum_{i=-\infty}^k a_i\xi^i,
\end{equation}
where the coefficients $a_i$ are from the differential field $\mathscr{A}$. The degree of such a formal series is defined as the greatest $j$ such that $a_j \neq 0$, with the convention that $\deg 0=-\infty$.

The multiplication of two monomials is defined by the formula
$$a\xi^i\circ b\xi^j=a\sum_{k=0}^\infty{\frac{i(i-1)\cdot\cdot\cdot(i-k+1)}{k!}D_x^k(b)\xi^{i+j-k}},$$
which by virtue of linearity is extended to the whole $\mathscr{L}$. Such a multiplication can be shown to be associative which implies that the commutator $\left[P,Q\right]=P\circ Q -Q\circ P$ endows $\mathscr{L}$ with a Lie algebra structure. To simplify writing, in what follows $\circ$ would be omitted whenever there is no risk of confusion.

Next, for any $L \in \mathscr{L}$
define its formal adjoint $L^*$ as follows:

\begin{equation}
\mathrm{if} \quad L=\sum_{i=-\infty}^q{a_i\xi^i}, \quad \mathrm{then} \quad L^*=\sum_{i=-\infty}^q{(-\xi)^i \circ a_i}.
\end{equation}

It is well known that every $L \in \mathscr{L}$ with  $\deg L=n>0$ has its $n$-th root which satisfies $$L=\underbrace{{L}^{1/n}\circ {L}^{1/n}\circ \dots\circ {L}^{1/n}}_{n\emph{~times}}=(L^{1/n})^n.$$

A \emph{formal symmetry of rank} $k$ for \eqref{evo} is a formal series $L \in \mathscr{L}$ of degree $m$ which satisfies
\begin{equation}
\label{fs}
\mathrm{deg}(D_t(L)-[\widehat{\emph{\textbf{D}}}_K , L]) \leq m+n-k,
\end{equation}
where for any $F=F(x,t,u,u_x,\dots,u_{kx})\in\Alg$ we define
\[
\widehat{\emph{\textbf{D}}}_F = \sum_{j=0}^{k}{\diff{F}{u_{jx}} \xi^j}.
\]

For any differential function $F\in\Alg$ we define its \emph{order} $\mathrm{ord}\ F$ as $\deg \widehat{\mathrm{\textbf{D}}}_F$.

Without loss of generality any \emph{local conservation law} for \eqref{evo} can be assumed to take the form,
\begin{equation}
\label{con}
D_t(\rho)=D_x(\sigma),
\end{equation}
where 
$\rho \in \mathscr{A}$ is called a \emph{conserved density} 
and $\sigma \in \Alg $ is the associated \emph{flux}. We shall omit the word \emph{local}, since we will deal only with local conservation laws, densities and fluxes.
The characteristic of a conservation law (\ref{con}) for \eqref{evo} is a differential function $P \in \Alg$ which satisfies
$$D_t(\rho)-D_x(\sigma)=P\cdot (u_t-K).$$

Two conservation laws for \eqref{evo}, $D_t(\rho)=D_x(\sigma)$ and $D_t(\tilde\rho)=D_x(\tilde\sigma)$ are \emph{equivalent}, see e.g. \cite{olv93, ps10}, if there exists a differential function $\zeta\in\Alg$ such that $\tilde\rho-\rho=D_x(\zeta)$ and $\tilde\sigma-\sigma=D_t(\zeta)$. It is readily seen that two equivalent conservation laws share the same characteristic.
Below we shall tacitly assume that the conservation laws are considered modulo the above equivalence.

An evolutionary partial differential equation \eqref{evo} is said to be \emph{Hamiltonian} if it can be written in the form
\begin{equation}
\label{ham}
\frac{\partial{u}}{\partial{t}}=\mathscr{D}\delta\mathscr{H},
\end{equation}
where $\mathscr{D}$ is a Hamiltonian operator, 
see e.g.\ Ch.\ 7 of \cite{olv93} for details,
$\delta$ is the operator of variational derivative, and $\mathscr{H}=\int{H}\, dx$, where $H \in \Alg$, is usually referred to as the \emph{Hamiltonian functional}, or just the \emph{Hamiltonian}.

Recall that the operator of variational derivative $\delta \mathscr{H}$
is defined as follows:
\begin{equation}
\delta \mathscr{H}=\frac{\delta }{\delta u} H = \sum_{i=1}^\infty{(-D_x)^i\biggl(\diff{H}{u_i}}\biggr).
\end{equation}

\begin{lemma}\emph{(\cite{olv93})}
\label{prop}
Consider a Hamiltonian equation in the form (\ref{ham}) and let (\ref{con}) define a conservation law. Then $\mathscr{D}\left(\delta\rho/\delta u\right)$ is a characteristic of a generalized symmetry of this equation.
\end{lemma}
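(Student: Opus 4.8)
The plan is to derive the assertion from two standard facts, both contained in Chapters~5 and~7 of \cite{olv93}: that the variational derivative of a conserved density is a cosymmetry (adjoint symmetry) of \eqref{evo}, and that a Hamiltonian operator maps cosymmetries to symmetries. Throughout set $P:=\delta\rho/\delta u$; recall that, by \eqref{ham}, $K=\mathscr{D}(\delta\mathscr{H})$.

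\emph{Step 1: the cosymmetry identity.} Rewrite the conservation law \eqref{con} as the identity $\diff{\rho}{t}+\mathrm{\textbf{D}}_\rho(K)=D_x(\sigma)$ in $\Alg$ and apply the variational derivative $\delta/\delta u$ to both sides. Using that $\delta/\delta u$ kills total $x$-derivatives, that it commutes with $\partial/\partial t$, and the identity $\frac{\delta}{\delta u}\bigl(\mathrm{\textbf{D}}_\rho(K)\bigr)=\mathrm{\textbf{D}}_P(K)+\mathrm{\textbf{D}}_K^{*}(P)$ --- a routine integration-by-parts computation with the Euler operator --- one obtains
\begin{equation}\label{aux-cosym}
D_t(P)+\mathrm{\textbf{D}}_K^{*}(P)=0 ,
\end{equation}
so that $P$ is a cosymmetry of \eqref{evo}.

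\emph{Step 2: transporting the cosymmetry through $\mathscr{D}$.} Put $Q:=\mathscr{D}(P)$; we must verify \eqref{chsym}, i.e.\ $D_t(Q)=\mathrm{\textbf{D}}_K(Q)$. Writing $\mathscr{D}=\sum_i d_i D_x^i$ and using $[D_t,D_x]=0$ gives $D_t(Q)=\dot{\mathscr{D}}(P)+\mathscr{D}(D_tP)$, where $\dot{\mathscr{D}}:=\sum_i D_t(d_i)D_x^i$; as $\mathscr{D}$ carries no explicit $t$-dependence, $\dot{\mathscr{D}}=\mathrm{pr}\,\mathbf{v}_K(\mathscr{D})$. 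Substituting \eqref{aux-cosym} turns this into $D_t(Q)=\dot{\mathscr{D}}(P)-\mathscr{D}\,\mathrm{\textbf{D}}_K^{*}(P)$. It then remains to invoke the structural fact that, since $\mathscr{D}$ is Hamiltonian and $K=\mathscr{D}(\delta\mathscr{H})$ is the corresponding Hamiltonian flow, this flow preserves $\mathscr{D}$:
\begin{equation}\label{aux-inv}
\mathrm{pr}\,\mathbf{v}_K(\mathscr{D})=\mathrm{\textbf{D}}_K\circ\mathscr{D}+\mathscr{D}\circ\mathrm{\textbf{D}}_K^{*} .
\end{equation}
Combining \eqref{aux-inv} with the previous display, the two copies of $\mathscr{D}\,\mathrm{\textbf{D}}_K^{*}(P)$ cancel, leaving $D_t(Q)=\mathrm{\textbf{D}}_K(\mathscr{D}(P))=\mathrm{\textbf{D}}_K(Q)$, which is exactly \eqref{chsym}. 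Hence $\mathbf{v}_Q$ with $Q=\mathscr{D}(\delta\rho/\delta u)$ is a generalized symmetry.

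I expect the only genuine obstacle to be the invariance relation \eqref{aux-inv}: unlike the manipulations in Step~1, which use merely skew-adjointness of $\mathscr{D}$ together with elementary properties of the Euler operator, \eqref{aux-inv} truly requires that $\mathscr{D}$ satisfy the Jacobi identity for the associated Poisson bracket. The cleanest route to it is through the functional-multivector (Schouten-bracket) calculus of Chapter~7 of \cite{olv93}, where it takes the form ``Hamiltonian vector fields are symmetries of the underlying Hamiltonian operator''. All the remaining steps are purely formal bookkeeping with Fr\'echet derivatives inside $\Alg$, requiring no convergence or analytic considerations.
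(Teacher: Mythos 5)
Your argument is correct: Step 1 (the Euler operator annihilates $D_x(\sigma)$, plus the identity $\delta(\mathbf{D}_\rho(K))/\delta u=\mathbf{D}_P(K)+\mathbf{D}_K^{*}(P)$) correctly yields the cosymmetry equation, and Step 2 correctly reduces the symmetry condition \eqref{chsym} to the flow-invariance of $\mathscr{D}$, which is indeed the one place where the Jacobi identity for the Hamiltonian operator is genuinely needed. The paper itself gives no proof of this lemma --- it is quoted from \cite{olv93} --- and your two-step route (conserved density $\Rightarrow$ cosymmetry, then Hamiltonian operator maps cosymmetries to symmetry characteristics) is precisely the standard argument of Ch.~7 of that reference, so there is nothing to compare beyond noting the match.
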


\section{Nonexistence of formal symmetries}
\label{nfs}

As we shall see in the next section, Theorems \ref{th1} and \ref{th2} 
are consequences of the following result:

\begin{theorem}
\label{th3}
If $b \neq 0$ and $f(u)$ is such that $\partial f/\partial u\neq 0$, then GKE \emph{(\ref{kaw})} has no nontrivial formal symmetry of rank 13 or greater.
\end{theorem}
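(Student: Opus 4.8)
The plan is to work directly with the defining inequality \eqref{fs} for a formal symmetry and extract a cascade of differential constraints on $f$ from the top-order coefficients, following the classical Mikhailov--Shabat--Sokolov scheme as recalled in Section~\ref{p}. Since GKE \eqref{kaw} has order $n=5$ and Fr\'echet derivative $\mathbf{D}_K=\xi^5+b\xi^3+fu_x + (\partial f/\partial u)u_x$ (more precisely $\widehat{\mathbf{D}}_K=\xi^5+b\xi^3+f\xi+(\partial f/\partial u)u_x$), a candidate formal symmetry of positive degree $m$ may be normalized, after replacing $L$ by a suitable Laurent polynomial in $L^{1/5}$ if necessary, to have degree $m=1$; equivalently one studies $L=\widehat{\mathbf{D}}_K^{1/5}+\sum_{i\le 0}a_i\xi^i$ and expands the obstruction $D_t(L)-[\widehat{\mathbf{D}}_K,L]$ in decreasing powers of $\xi$. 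Requiring the coefficients of $\xi^{m+n-1},\xi^{m+n-2},\dots$ down to $\xi^{m+n-13}$ to vanish (rank $13$ forces $13$ such conditions) yields a recursive system $D_x(a_{i})=(\text{known in }a_{i+1},\dots)$, where the solvability of each step imposes that a certain differential polynomial in $f$ and its $u$-derivatives be a total $x$-derivative, i.e.\ lie in the image of $D_x$.

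The key steps, in order, are: (i) write down $\widehat{\mathbf{D}}_K$ and $D_t$ restricted to \eqref{kaw}, and fix the degree-$1$ normalization of $L$; (ii) compute the first few coefficients $a_0,a_{-1},a_{-2},\dots$ of $L$ by solving the vanishing conditions, carrying $f$ and its derivatives symbolically; (iii) at each order isolate the obstruction term --- the piece that cannot be absorbed by choosing $a_i$ because it is not in $\mathrm{Im}\,D_x$ --- and set it to zero, obtaining an ODE (or a few ODEs) for $f(u)$; (iv) show that already the accumulated conditions up through rank $13$ force $\partial f/\partial u=0$, contradicting the hypothesis, so no formal symmetry of rank $\ge 13$ exists unless $f$ is constant. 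Because $f$ depends on $u$ alone, the relevant integrability (conserved-density-type) conditions reduce to requiring expressions built from $f,f',f'',\dots$ to be constants or exact derivatives, and one expects the first genuinely restrictive condition to appear at a definite order (the threshold $13$ is presumably sharp because the exceptional linear and logarithmic $f$'s from Theorems~\ref{th1}--\ref{th2} survive up to rank $12$).

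I would organize the bookkeeping by noting that all the $a_i$ and all obstructions are rational local functions (elements of $\mathscr{A}_0$), so the computation stays inside a manageable differential field, and by using the standard fact that $g\in\mathrm{Im}\,D_x$ iff $\delta g/\delta u=0$ to test each obstruction. It is convenient to first treat the ``generic'' branch where $u\,\partial f/\partial u$, $\partial f/\partial u$ and $1$ are linearly independent and show the rank-$13$ obstruction is nonzero there, then separately dispatch the two degenerate branches $f=\alpha u+\beta$ and $f=\gamma\ln(u+c)+\delta$ by pushing the expansion two steps further (ranks up to, say, $14$) and checking that even these $f$ fail, which is exactly what is needed to rule out rank $\ge 13$ for every nonconstant $f$.

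The main obstacle is the sheer size and organization of the symbolic computation: the coefficients $a_{-i}$ become lengthy differential polynomials in $f',f'',\dots$ and in $u_x,u_{xx},\dots$, and correctly identifying, at each order, which combination is the true obstruction (versus what can be gauged away by redefining lower $a_i$ or by adding a trivial formal symmetry) requires care; in practice this is done with a computer algebra system, and the write-up will present only the resulting chain of conditions on $f$ together with the final contradiction, rather than the intermediate expressions.
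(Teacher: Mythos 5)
Your overall strategy is the same as the paper's: normalize the candidate formal symmetry to degree one, expand the defect $D_t(L)-[\widehat{\mathrm{\textbf{D}}}_K,L]$ in decreasing powers of $\xi$, solve the resulting chain $-5D_x(l_i)=F_i$, and at each step test solvability by the condition $\delta F_i/\delta u=0$. So the framework is right. The difficulty is that your write-up stops exactly where the proof begins: every substantive claim (which order carries the first genuine obstruction, what that obstruction says about $f$, and how the contradiction finally arises) is deferred to an unperformed computation, and the guesses you make about its outcome do not match what actually happens.

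Concretely: the first nontrivial solvability condition occurs at the coefficient of $\xi^{-3}$ and reads $g\,\partial^3 f/\partial u^3=0$ together with $(\partial f/\partial u)(\partial g/\partial t)=0$, where $g$ is the leading coefficient of $L$. So the branch that survives the first obstruction is $f$ an at most \emph{quadratic} polynomial in $u$ (normalizable to $f=u^2$), not the linear and logarithmic cases you predict from Theorems~\ref{th1}--\ref{th2}; those exceptional $f$'s are artifacts of the Lie point symmetry classification and play no role here (the logarithmic one is already killed by $\partial^3 f/\partial u^3=0$). Moreover, the endgame is not that the conditions force $\partial f/\partial u=0$: for the surviving quadratic branch the successive conditions force $\partial g/\partial t=0$, then $l_0,l_1,l_2$ constant, and finally, at the coefficient of $\xi^{-7}$, the equation $g=0$, which contradicts $\deg L=1$. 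Your proposed normalization $L=\widehat{\mathrm{\textbf{D}}}_K^{1/5}+\cdots$ (leading coefficient $1$) would obscure this mechanism, since there the contradiction must instead appear as a non-removable obstruction; the paper's choice of an undetermined leading coefficient $g$ is what lets the contradiction take the clean form $g=0$. Also, rank $13$ suffices uniformly; no excursion to rank $14$ is needed for any branch. Until the obstruction chain is actually computed and these specific conditions exhibited, the argument is a plausible plan rather than a proof.
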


\begin{proof}[Proof of Theorem \ref{th3}]

We will prove Theorem \ref{th3} by contradiction, so suppose that there exists a nonconstant $L \in \mathscr{L}$ with $\deg L \neq 0$, which is a formal symmetry of rank 13 (or greater). If so, this $L$ must satisfy equation (\ref{fs}), namely:
\begin{equation}
\label{fsk}
\mathrm{deg}\, (D_t(L)-[\widehat{\mathrm{\textbf{D}}}_K , L]) \leq \mathrm{deg}\, L + \mathrm{deg}\, \widehat{\mathrm{\textbf{D}}}_K -13.
\end{equation}

Without loss of generality (cf.\ \cite{mik87, mik09, olv93, s99}) we set $\deg L=1$, so $L$ takes the form

$$L=g\xi+\sum_{i=0}^\infty{l_i\xi^{-i}},$$
with $g,l_i \in \mathscr{A}$. Then equation (\ref{fsk}) boils down to
\begin{equation}
\mathrm{deg}(D_t(L)-[\widehat{\mathrm{\textbf{D}}}_K , L]) \leq -7.
\end{equation}

We now need to equate to zero the coefficients at $\xi^i$ step by step, starting from $i=5$ (it is readily seen that the coefficients at $\xi^i$ for $i>5$ are identically zeroes) and going down.

For $i=5$ we get $$-5D_x(g)=0.$$ This means that $g$ is an arbitrary  function of $t$ only, as the kernel of $D_x$ is exhausted by such functions \cite{olv93}.

Next we equate to zero the coefficient at $\xi^4$ which yields
 $$-5D_x(l_0)=0,$$
 so likewise $l_0$ is a function of $t$ alone.

If we continue like this for $i=4,3,2$ we obtain equations of the same shape as above, $$-5D_x(l_i)=0,$$ which are solved in the same way as the one for $g$.

For $i=1,0,-1,\dots$ we get slightly more complicated equations, namely
\begin{equation}
\label{coe}
-5D_x(l_i)=F_i
\end{equation}
for some $F_i \in \Alg$, which for $i=1,0,-1,-2$ are however solved just as easily as above. Recall \cite{olv93} that a necessary condition for this kind of equations to be solvable in the class of differential functions is that the equality $\delta F_i/\delta u=0$ holds.

The first case when the condition $\delta F_i/\delta u=0$ is nontrivial occurs for $i=-3$.

It can be readily checked that for $\delta F_{-3}/\delta u=0$ to hold we have to solve the following system:
\begin{equation}
\label{nont}
-\frac{1}{5} g \frac{\partial ^3 f}{\partial u^3}=0 , \qquad \frac{3}{25}
\frac{\partial f}{\partial u}  \frac{\partial g}{\partial t}  =0.
\end{equation}

The first equation tells us that if $\partial ^3 f/\partial u^3 \neq 0$ we arrive at a contradiction with our initial assumption, because in this case $g$ would have to be zero, i.e., for the case when $\partial ^3 f/\partial u^3 \neq 0$ our theorem is already proved.

We shall now continue with the case when  $\partial^3 f/ \partial u^3=0$, i.e., $f$ is at most quadratic polynomial in $u$. Using the transformation $u\longrightarrow u-p_1/(2p_2)$ we can turn any polynomial $f=p_2u^2+p_1u+p_0$ into $\widetilde{f}(u)=p_2u^2+\widetilde{p_0}$. Quite similarly we can eliminate $\widetilde{p_0}$ by transformation $x \longrightarrow x+\widetilde{p_0}t$ and by rescaling $u \longrightarrow u/\sqrt{p_2}$ to get rid of $p_2$. Thus without loss of generality we are left with the case $$f(u)=u^2.$$

Then, if we take a closer look at the second equation in \eqref{nont}, we see that
\begin{equation}\label{fceg}\diff{g}{t}=0,\end{equation}
which can be satisfied only if $g$ is a constant. 

For $i=-4$ the condition $\delta F_{-4} /\delta u=0$ yields
$$\frac{6}{25} u \frac{\partial  l_0}{\partial t}=0,$$
so $l_0$ turns out to be a constant rather than a function of $t$.

However, we know that constants are trivial formal symmetries for any equation \eqref{evo}, so to simplify our computations we put $l_0=0$ without loss of generality.

For $i=-5, -6$ we have  $\partial l_i/\partial t = 0$, so $l_1$ and $l_2$ are arbitrary constants just like $g$.

Finally for $i=-7$ we obtain the system which contains, among other, the equation
$$g=0,$$
which 
contradicts the initial assumption that deg\,$L=1$ so the proof is completed.
\end{proof}

\section{Generalized symmetries and conservation laws}
\label{gscl}
We shall need the following lemma:
\begin{lemma}
\label{lem}
If $G$ is a characteristic of generalized symmetry of order $s$ for GKE \eqref{kaw} then $\widehat{\textbf{D}}_G$
is a formal symmetry of degree $s$ and rank at least $s+4$ for this equation.
\end{lemma}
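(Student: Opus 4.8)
The plan is to exploit the well-known fact that the Fréchet derivative turns the symmetry equation \eqref{chsym} into the formal-symmetry equation \eqref{fs} with essentially no loss. Concretely, suppose $G\in\Alg$ is a characteristic of a generalized symmetry of order $s$, so that $D_t(G)=\mathbf{D}_K(G)$. Applying the Fréchet derivative operator to both sides and using the Leibniz-type identity $\mathbf{D}_{D_t(G)}=D_t\circ\mathbf{D}_G + \mathbf{D}_G\circ\mathbf{D}_K$ together with $\mathbf{D}_{\mathbf{D}_K(G)}=\mathbf{D}_K\circ\mathbf{D}_G + (\text{terms involving }\mathbf{D}_G\text{ applied to }K)$ — i.e. the standard commutator formula for Fréchet derivatives of a bracket — one gets that $\widehat{\mathbf{D}}_G$ satisfies $D_t(\widehat{\mathbf{D}}_G)=[\widehat{\mathbf{D}}_K,\widehat{\mathbf{D}}_G]$ as an identity in $\mathscr{L}$, hence \emph{exactly} (with zero right-hand side), so it is a formal symmetry of every rank. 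The degree claim is immediate: since $G$ has order $s$, by definition $\mathrm{ord}\,G=\deg\widehat{\mathbf{D}}_G=s$, so $\widehat{\mathbf{D}}_G$ has degree $s$.

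The only subtlety — and the step I expect to be the main obstacle — is that the passage from $\mathbf{D}_F$ (an operator acting on functions, where $D_x^j$ genuinely differentiates) to $\widehat{\mathbf{D}}_F$ (a formal series in the symbol $\xi$) is not a Lie algebra homomorphism on the nose: the composition $\circ$ in $\mathscr{L}$ encodes the "carrying" terms $D_x^k(b)$ that one would get from genuinely commuting $D_x$ past a coefficient, whereas in $\mathbf{D}_F\circ\mathbf{D}_G$ those terms are present automatically. So I would first record carefully the relation between $\widehat{\mathbf{D}_F\circ\mathbf{D}_G}$ and $\widehat{\mathbf{D}}_F\circ\widehat{\mathbf{D}}_G$, and likewise for $D_t$ acting on the coefficients versus $D_t$ of a formal series, and verify that the discrepancy terms cancel precisely against the lower-order part of the Fréchet-derivative-of-a-bracket formula. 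This is essentially the content of the classical lemma (see \cite{mik87, mik09, olv93, s99}) that the linearization of a symmetry is a formal symmetry; I would cite those sources and carry out the bookkeeping only to the extent needed to pin down the rank.

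For the rank estimate "at least $s+4$", the point is that the defining inequality \eqref{fs} for rank $k$ requires $\deg(D_t(L)-[\widehat{\mathbf{D}}_K,L])\le m+n-k$ where here $m=\deg L=s$ and $n=\mathrm{ord}\,K=5$ for GKE \eqref{kaw}. Since, by the computation above, the left-hand side is identically $-\infty$ (the bracket relation holds exactly), the inequality $-\infty\le s+5-k$ holds for every $k$; in particular it holds for all $k$, so $\widehat{\mathbf{D}}_G$ is a formal symmetry of arbitrarily large rank, which is certainly $\ge s+4$. I would state it in the weaker "$\ge s+4$" form because that is all that is needed in the sequel (to feed into Theorem \ref{th3}), and because if one only uses the symmetry equation modulo lower-order corrections — which is the honest situation once one tracks the $\circ$-versus-ordinary-composition discrepancy carefully — the cancellation is guaranteed only down to degree $m+n-k$ with $k=s+4$, the "$+4$" being $n-1$. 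Thus the clean statement to prove and use is: $\widehat{\mathbf{D}}_G$ has degree $s$ and satisfies \eqref{fs} with $k=s+4$, i.e. it is a formal symmetry of rank (at least) $s+4$, and I would present the argument as the degree computation plus the Fréchet-derivative bracket identity plus this degree accounting.
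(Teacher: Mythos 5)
There is a genuine error at the heart of your argument: the identity you derive, $D_t(\widehat{\mathbf{D}}_G)-[\widehat{\mathbf{D}}_K,\widehat{\mathbf{D}}_G]=0$ ``exactly,'' is false. The correct consequence of $D_t(G)=\mathbf{D}_K(G)$ (Sokolov's identity, which is what the paper uses) is
\[
D_t(\widehat{\mathbf{D}}_G)-[\widehat{\mathbf{D}}_K,\widehat{\mathbf{D}}_G]=\mathbf{D}_G(\widehat{\mathbf{D}}_K),
\]
where $\mathbf{D}_G$ acts on the \emph{coefficients} of $\widehat{\mathbf{D}}_K$; this extra term comes from the chain rule when you linearize $\mathbf{D}_K(G)$ (you must differentiate the coefficients $\partial K/\partial u_{jx}$ as well), and it does not vanish. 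A concrete check: for $G=u_x$ one has $\widehat{\mathbf{D}}_G=\xi$, $D_t(\xi)=0$, and $[\widehat{\mathbf{D}}_K,\xi]=-D_x(f)\xi-D_x(\partial f/\partial u\,u_x)\neq 0$. Worse, if your ``exact'' version were true, then the symmetry $u_x$ alone would make $\xi$ a formal symmetry of arbitrarily high rank for GKE, contradicting Theorem \ref{th3} and collapsing the paper's entire strategy. You also misdiagnose where the defect lives: the passage from differential operators to formal series in $\xi$ \emph{is} an algebra homomorphism --- the product $\circ$ on $\mathscr{L}$ is defined precisely to reproduce the ``carrying'' terms $D_x^k(b)$ --- so there are no discrepancy terms to track there; the only obstruction is the term $\mathbf{D}_G(\widehat{\mathbf{D}}_K)$ above.

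Consequently your rank accounting, while landing on the right number, is not justified. The bound $s+4$ does not come from a generic ``$+\,(n-1)$'': it comes from the specific fact that for GKE
\[
\widehat{\mathbf{D}}_K=\xi^5+b\xi^3+f\xi+\diff{f}{u}u_x
\]
has \emph{constant} coefficients at $\xi^5$ and $\xi^3$, so $\mathbf{D}_G$ annihilates them and $\deg\bigl(\mathbf{D}_G(\widehat{\mathbf{D}}_K)\bigr)\le 1$. Then \eqref{fs} with $m=s$, $n=5$ requires $s+5-k\ge 1$, i.e.\ $k\le s+4$. (Had the coefficient of $\xi^3$ depended on $u$, the defect would have degree $3$ and the rank would only be $s+2$.) The degree claim $\deg\widehat{\mathbf{D}}_G=s$ is fine as stated. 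To repair the proof, replace the ``exact'' bracket identity by the one displayed above and insert the explicit degree estimate on $\mathbf{D}_G(\widehat{\mathbf{D}}_K)$; that is exactly the paper's argument.
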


\begin{proof}
Indeed, suppose that $G$ is a characteristic of a generalized symmetry, then by virtue of $D_t(G)=\mathrm{\textbf{D}}_K(G)$, where $K$ now denotes the right-hand side of \eqref{kaw}, we have \cite{sok88}
\[
D_t(\widehat{\mathrm{\textbf{D}}}_G)-[\widehat{\mathrm{\textbf{D}}}_K,\widehat{\mathrm{\textbf{D}}}_G]-\mathrm{\textbf{D}}_G(\widehat{\mathrm{\textbf{D}}}_K)=0,
\]
so
\[
\deg\left(D_t(\widehat{\mathrm{\textbf{D}}}_G)-[\widehat{\mathrm{\textbf{D}}}_K,\widehat{\mathrm{\textbf{D}}}_G]\right)=\deg\left(\mathrm{\textbf{D}}_G(\widehat{\mathrm{\textbf{D}}}_K)\right)
\]
but as in our case
\[
\widehat{\mathrm{\textbf{D}}}_K=\xi^5+b\xi^3+f\xi+\diff{f}{u}u_x,
\]
we have
\[
\deg\left(\mathrm{\textbf{D}}_G(\widehat{\mathrm{\textbf{D}}}_K)\right)\leq 1,
\]
and we see that $\widehat{\mathrm{\textbf{D}}}_G$ indeed is a formal symmetry for \eqref{kaw} of rank (at least) $s+4$, where $s$ is the order of $G$.
\end{proof}
By the above lemma Theorem \ref{th3} implies that GKE has no generalized symmetries of order greater than 8. All generalized symmetries of order up to 8 for \eqref{kaw} can be readily computed, and, in particular, we arrive at the following result:
\begin{proposition}\label{psym}
GKE with $f(u)$ such that $\partial f/\partial u\neq 0$ admits only generalized symmetries which are equivalent to Lie point ones, i.e., it has no genuinely generalized symmetries.
\end{proposition}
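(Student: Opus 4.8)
The plan is to reduce Proposition \ref{psym} to the finite computational problem left open by Lemma \ref{lem}. By Lemma \ref{lem}, if $G$ is a characteristic of a generalized symmetry of order $s$, then $\widehat{\mathbf{D}}_G$ is a formal symmetry of rank at least $s+4$. If $s>8$, this rank is at least $13$, contradicting Theorem \ref{th3} (in the case $b\neq 0$); the case $b=0$ must be handled separately, either by a direct analogue of Theorem \ref{th3} or by absorbing it into the same scaling argument. Thus every generalized symmetry of \eqref{kaw} has order at most $8$, and the problem becomes the explicit determination of all solutions $Q\in\Alg$ with $\mathrm{ord}\,Q\leq 8$ of the defining equation $D_t(Q)=\mathbf{D}_K(Q)$.

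First I would carry out this bounded-order computation. Writing $Q=Q(x,t,u,u_x,\dots,u_{8x})$ and substituting into $D_t(Q)=\mathbf{D}_K(Q)$ with $K=u_{5x}+bu_{xxx}+f(u)u_x$, one equates coefficients of the highest $x$-derivatives of $u$ that appear after applying $D_t$ and $\mathbf{D}_K$. The leading-order analysis forces $Q$ to depend on $u_{5x},\dots,u_{8x}$ only through a small number of terms; in fact the standard argument shows the top-order dependence is severely constrained, and after peeling off derivative-total terms (which do not change the symmetry class) one is reduced to $Q$ of low order. I expect the surviving candidates to be exactly $c_1 Q_1 + c_2 Q_2$ with possibly an extra $x,t$-explicit term, and the determining equations for that extra term reproduce precisely the linear-dependence dichotomy on $u\partial f/\partial u$, $\partial f/\partial u$, $1$ that appears in Theorem \ref{th1}; this is where the cases $f=\alpha u+\beta$ and $f=\gamma\ln(u+c)+\delta$ (and their characteristics $Q_3$, $Q_4$) come out. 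Every such $Q$ is, up to adding $D_x$ of a differential function, the characteristic of a Lie point symmetry, which is the assertion of the proposition.

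The main obstacle is the honest bookkeeping of the order-$\leq 8$ determining equations: unlike the formal-symmetry computation of Theorem \ref{th3}, here one cannot invoke the rank truncation, so the full hierarchy of coefficient equations down to order zero must be solved, and one must carefully track the ambiguity by trivial symmetries (characteristics of the form $D_x(\eta)$, or more generally those in the image of $\mathbf{D}$ on conserved quantities) to be sure the final list is complete and irredundant. A secondary point requiring care is the reduction "equivalent to Lie point": one must verify that each admissible $Q$ differs from a genuine point-symmetry characteristic $\xi u_x + \phi(x,t,u) - \xi u_t$-type expression by an element that is trivial in the appropriate sense, so that the classification in Theorem \ref{th1} is indeed exhaustive. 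Once these are settled, the proposition follows immediately, and Theorems \ref{th1} and \ref{th2} are then read off, the conservation-law part of Theorem \ref{th2} using Lemma \ref{prop} together with the Hamiltonian structure $\widehat{\mathbf{D}}_K$ indicated above to convert the order bound on symmetries into an order bound on conservation-law characteristics.
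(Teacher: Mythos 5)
Your proposal follows essentially the same route as the paper: Lemma \ref{lem} combined with Theorem \ref{th3} bounds the order of any generalized symmetry by $8$, and the proposition is then obtained by directly solving the determining equation $D_t(Q)=\mathbf{D}_K(Q)$ for $Q$ of order at most $8$ (the paper states only that this finite computation "can be readily" carried out). Your remark that Theorem \ref{th3} is stated under the hypothesis $b\neq 0$, so the case $b=0$ needs separate treatment, is a fair observation about a point the paper passes over silently; otherwise the argument matches.
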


The subsequent computation of Lie point symmetries for GKE \eqref{kaw} gives rise to Theorem \ref{th1}.

Next, using the Hamiltonian structure we obtain the following

 \begin{proposition}
\label{consth}
GKE \eqref{kaw} with $f(u)$  such that $\partial f/\partial u\neq 0$ admits only local conservation laws with the characteristics of order not greater than four.
\end{proposition}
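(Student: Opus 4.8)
The plan is to mimic, in the Hamiltonian/conservation-law setting, exactly the mechanism that produced Theorem \ref{th3}. First I would recall that GKE \eqref{kaw} is Hamiltonian with the obvious first-order Hamiltonian operator $\mathscr{D}=D_x$; indeed $u_t=D_x\bigl(u_{4x}+bu_{xx}+r(u)\bigr)=D_x\,\delta\mathscr{H}$ for a suitable Hamiltonian density built from $u_{xx}$, $u_x$ and $\hat r(u)$. With this structure in hand, Lemma \ref{prop} applies: if \eqref{con} is a local conservation law with conserved density $\rho$, then $Q=\mathscr{D}(\delta\rho/\delta u)=D_x(\delta\rho/\delta u)$ is the characteristic of a generalized symmetry of \eqref{kaw}.

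The next step is a bookkeeping one about orders. If the conservation law has characteristic of order $s$ (equivalently, $\delta\rho/\delta u$ has order $s$), then $Q=D_x(\delta\rho/\delta u)$ has order $s+1$. By Lemma \ref{lem}, $\widehat{\mathrm{\textbf{D}}}_Q$ is then a formal symmetry of \eqref{kaw} of rank at least $(s+1)+4=s+5$. Theorem \ref{th3} forbids nontrivial formal symmetries of rank $13$ or greater, so we must have $s+5\le 12$, i.e.\ $s\le 7$. Actually, to land on the sharper bound "order not greater than four" claimed in the proposition, I would instead invoke the intermediate consequence already extracted after Lemma \ref{lem}, namely that GKE has no generalized symmetries of order greater than $8$: this forces $s+1\le 8$, so $s\le 7$, and then one inspects the finitely many remaining cases $s\in\{5,6,7\}$ directly, showing the corresponding candidate densities either do not exist or are trivial (total $x$-derivatives), leaving only characteristics of order $\le 4$. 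The honest statement is that Proposition \ref{consth} follows from Theorem \ref{th3} together with Lemma \ref{prop} and the remark that $D_x$ raises order by exactly one, plus a finite residual computation.

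Concretely, I would organize the argument as follows. (i) Exhibit the Hamiltonian operator $\mathscr{D}=D_x$ and the Hamiltonian $\mathscr{H}=\int\bigl(\tfrac12 u_{xx}^2-\tfrac12 b u_x^2+\hat r(u)\bigr)\,dx$, checking $u_t=D_x\delta\mathscr{H}$ against \eqref{kaw}. (ii) Let a conservation law with density $\rho$ and characteristic order $s$ be given; note $\delta\rho/\delta u$ has order $s$, hence $Q=D_x(\delta\rho/\delta u)$ has order exactly $s+1$ (the leading coefficient survives differentiation), unless $\delta\rho/\delta u$ is trivial, in which case $\rho$ is a total derivative and the conservation law is equivalent to the zero one. (iii) Apply Lemma \ref{lem} and Theorem \ref{th3} as above to get $s\le 7$. (iv) Dispose of $s=5,6,7$ by the explicit (finite) symmetry/conservation-law computation already alluded to in Section \ref{gscl}, concluding $s\le 4$.

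The main obstacle is step (iv) — or, more precisely, being careful about the gap between the rank bound $s\le 7$ that comes for free and the claimed bound $s\le 4$. One has to either push the formal-symmetry nonexistence argument a little further for the specific ranks $9,10,11,12$ (which, given the pattern in the proof of Theorem \ref{th3}, is plausible but needs checking), or carry out the direct classification of conservation laws of orders $5,6,7$ and verify they all reduce to the densities $\rho_1,\rho_2,\rho_3,\rho_4$ of Theorem \ref{th2} modulo total derivatives. A secondary subtlety is confirming that the correspondence $\rho\mapsto D_x(\delta\rho/\delta u)$ does not accidentally send a nontrivial conservation law to a \emph{trivial} symmetry (a constant), which would break the order bookkeeping; this is handled by the observation that a characteristic of a conservation law that is itself trivial as a cosymmetry forces the density to be a total $x$-derivative, i.e.\ the conservation law to be equivalent to zero.
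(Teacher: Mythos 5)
Your overall strategy is exactly the paper's: exhibit the Hamiltonian operator $\mathscr{D}=D_x$, invoke Lemma \ref{prop} to map a conservation-law characteristic $P=\delta\rho/\delta u$ of order $s$ to a symmetry characteristic $D_x(P)$ of order $s+1$, and then bound $s$ from the known restrictions on symmetries. The bookkeeping details you supply (that $D_x$ raises the order by exactly one because the leading coefficient survives, and that a characteristic annihilated by $D_x$ corresponds to a low-order or trivial conservation law) are correct and harmless.

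However, there is a gap in how you close the argument, and it is a gap the paper's own proof does not have. You feed into Lemma \ref{prop} only the crude consequence of Theorem \ref{th3} that there are no generalized symmetries of order greater than $8$, which yields $s\le 7$, and you then defer the cases $s\in\{5,6,7\}$ to an unperformed ``finite residual computation.'' But by the point Proposition \ref{consth} is stated, the paper has already established Proposition \ref{psym}: \emph{all} generalized symmetries of \eqref{kaw} have been classified, they are equivalent to Lie point ones, and the characteristic of largest order among them is $Q_1=u_{5x}+bu_{xxx}+fu_x$, of order five. Hence $D_x(\delta\rho/\delta u)$, being a symmetry characteristic, has order at most five, and $s\le 4$ follows immediately — no case analysis for $s=5,6,7$ is needed. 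So your step (iv) is not merely tedious, it is superfluous; the sharper input was already available and you did not use it. To repair your write-up, replace the bound ``no symmetries of order $>8$'' by the statement of Proposition \ref{psym} together with the observation that the maximal order occurring in the resulting list of symmetry characteristics is five, and delete step (iv).
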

\emph{Sketch of proof.}
Recall, see e.g.\ \cite{gan17}, that \eqref{kaw} admits a Hamiltonian operator $\mathscr{D}=D_x.$ Using Lemma \ref{prop} we see that applying the operator $\mathscr{D}$ to a characteristic $P$ of a conservation law yields a characteristic of a symmetry of order higher by one than that of $P$, so we have to consider only conservation laws with order one less than the greatest order of previously found symmetries which by Proposition~\ref{psym} is five; cf.\ \cite{vod} for a similar argument. Thus, the most general conservation law for \eqref{kaw} has a characteristic of the form $P=P(x,t,u,u_x,u_{xx},u_{xxx},u_{4x})$.
\hfill $\Box $
\looseness-1

Using Proposition \ref{consth} we can readily find all local conservation laws for GKE \eqref{kaw} which we presented in Theorem \ref{th2}.

	\subsection*{Acknowledgments}
I would like to express my most sincere gratitude to Artur Sergyeyev for stimulating discussions and valuable comments.

A large part of the computations in the article were performed using the package \emph{Jets} \cite{jets}.

This research was supported by the Specific Research grant SGS/6/2017 of the Silesian University in Opava. \nopagebreak[4]
	
	\small
	

\begin{thebibliography}{11}
	\itemsep=-0.25mm
	\footnotesize
\bibitem{jets}
  Baran, H.; Marvan, M.,
	\emph{Jets. A software for differential calculus on jet spaces and diffieties},
	available online at {\tt{http://jets.math.slu.cz}}	




\bibitem{cd}
Calogero, F.; Degasperis, A.,
 \emph{Spectral transform and solitons. Vol. I. Tools to solve and investigate nonlinear evolution equations},
    North-Holland Publishing Co., Amsterdam-New York, 1982.

\bibitem{luc18}
 de Lucas, J.; Grundland, A.M., 
 \emph{A cohomological approach to immersed submanifolds via integrable systems}, 
 Selecta Math. (N.S.) 24 (2018), no. 5, 4749--4780.
	
\bibitem{des12}
 De Sole, A.; Kac, V.G.,
 \emph{Essential variational Poisson cohomology},
 Comm. Math. Phys. 313 (2012), no. 3, 837--864, 	arXiv:1106.5882.
		
\bibitem{dem10}
 Demina, M.V.;  Kudryashov, N.A.,
\emph{From Laurent series to exact meromorphic solutions: The Kawahara equation},
Phys. Lett. A 374 (2010), no. 39, 4023--4029, arXiv:1112.5266.

\bibitem{dun}Dunajski, M., \emph{Solitons, Instantons and Twistors}, Oxford University Press, 2009.

\bibitem{fok80}
Fokas, A.S.,
 \emph{A symmetry approach to exactly solvable evolution equations},
 J. Math. Phys. 21 (1980), no. 6, 1318--1325.

\bibitem{fok87}
Fokas, A.S.,
\emph{Symmetries and integrability},
Stud. Appl. Math. 77 (1987), no. 3, 253--299. 

\bibitem{fok96}
 Fokas, A.S.; Gelfand, I.M.,
 \emph{Surfaces on Lie groups, on Lie algebras, and their integrability},
 Commun. Math. Phys. 177 (1996), no. 1, 203--220.

\bibitem{gan17}
Gandarias, M.L.; Rosa, M.; Recio, E.; Anco, S.,
	\emph{Conservation laws and symmetries of a generalized Kawahara equation},
   AIP Conference Proceedings 1836 (2017), no. 1, 020072.

\bibitem{kaw72}
Kawahara, T.,
\emph{Oscillatory Solitary Waves in Dispersive Media},
J.\ Phys.\ Soc.\ Jpn.\ 33 (1972), 260--264. \looseness-1


\bibitem{kra11}
  Krasil'shchik, J.; Verbovetsky, A.,
  \emph{Geometry of jet spaces and integrable systems},
  J. Geom. Phys. 61 (2011), no. 9, 1633--1674,	arXiv:1002.0077.
	
\bibitem{kvv18}
Krasil'shchik, J.; Verbovetsky, A.; Vitolo, R.,
  \emph{The symbolic computation of integrability structures for partial differential equations},
	Springer, Cham, 2017.
	
\bibitem{kup00}
	Kupershmidt, B.A.,
	\emph{KP or mKP. Noncommutative mathematics of Lagrangian, Hamiltonian, and integrable systems.}
	Mathematical Surveys and Monographs, 78. AMS, Providence, RI, 2000.

\bibitem{kur14}
  Kuriksha, O.; Po\v{s}ta, S.; Vaneeva, O.,  
\emph{Group classification of variable coefficient generalized Kawahara equations}
	 J. Phys. A: Math. Theor. 47 (2014), arXiv:1309.7161. 


\bibitem{mik87}	
  Mikhailov, A.V.; Shabat, A.B.; Yamilov, R.I.,
  \emph{The symmetry approach to the classification of non-linear equations. Complete lists of integrable systems},
  Russ.\ Math.\ Surv. 42 (1987), no. 4, 1--63.
	
\bibitem{mik09}
  Mikhailov, A.V.; Sokolov, V.V.,
	\emph{Symmetries of Differential Equations and the Problem of Integrability}, in \emph{Integrability}, ed. Mikhailov A.V.,
  Springer, Berlin Heidelberg, 2009, 19--88.

\bibitem{mos}
Morozov, O.I.; Sergyeyev, A.,
\emph{The four-dimensional Mart\'\i{}nez Alonso--Shabat equation: reductions and nonlocal symmetries}, 
J. Geom. Phys. 85 (2014), 40--45, arXiv:1401.7942.

\bibitem{olv93}
  Olver, P.J.,
  \emph{Applications of Lie Groups to Differential Equations},
  Springer, New York,
  2nd ed.,
  1993.\looseness=-1
	
\bibitem{pol12}
  Polyanin, A.D.; Zaitsev, V.F.,
	\emph{Handbook of nonlinear partial differential equations},
	CRC Press, Boca Raton, FL,
	Second edition,
	2012.
	
\bibitem{ps10}
Popovych, R.O.; Sergyeyev, A.,
\emph{Conservation laws and normal forms of evolution equations},
Phys. Lett. A 374 (2010), no. 22, 2210--2217,  	arXiv:1003.1648.	

\bibitem{rit66}
 Ritt, J.F., \emph{Differential algebra}, Dover Publications, Inc., New York 1966.

\bibitem{s99}
Sergyeyev, A.,
\emph{On time-dependent symmetries and formal symmetries of evolution equations},
in \emph{Symmetry and Perturbation Theory} (Rome, 1998), ed.\ G. Gaeta, World Scientific,
1999, 303--308, arXiv:solv-int/9902002.

\bibitem{s17}
Sergyeyev, A.,
\emph{A Simple Construction of Recursion Operators for Multidimensional Dispersionless Integrable Systems},
 J. Math. Anal. Appl. 454 (2017), no. 2, 468--480, arXiv:1501.01955.

\bibitem{s18}
Sergyeyev, A.,
\emph{New integrable (3+1)-dimensional systems and contact geometry},
Lett. Math. Phys. 108 (2018), no. 2, 359--376, arXiv:1401.2122.



\bibitem{sev16}
Sergyeyev, A.; Vitolo, R.,
  \emph{Symmetries and conservation laws for the Karczewska--Rozmej--Rutkowski--Infeld equation},
  Nonlinear Anal. Real World Appl. 32 (2016), 1--9, arXiv:1511.03975.


\bibitem{sok88}
 Sokolov, V.V.,
\emph{On the symmetries of evolution equations},
Russ.\ Math.\ Surv.\ 43 (1988), no.\ 5, 
165--204.\looseness=-1

\bibitem{vps}
Vaneeva, O.O.; Popovych, R.O.; and Sophocleous, C.,
\emph{Equivalence transformations in the study of integrability},
Phys. Scr. 89 (2014) 038003, 9 p., arXiv:1308.5126.

\bibitem{vod}
Vodov\'a, J.,
\emph{A complete list of conservation laws for non-integrable compacton equations of $K(m,m)$ type},
Nonlinearity 26 (2013), no. 3, 757--762,  arXiv:1206.4401.
	
\end{thebibliography}
\end{document}